\newcommand{\doublewidetilde}[1]{{%
		\mathpalette\double@widetilde{#1}%
	}}
\def\BState{\State\hskip-\ALG@thistlm}
\newtheorem{definition}{Definition}
\newtheorem{lemma}{Lemma}
\newtheorem{corollary}{Corollary}
\newtheorem{remark}{Remark}
\newcounter{eqnback}
\newcounter{eqncnt}
\begin{document}
%
\title{Uplink Power Control in Cellular Massive MIMO Systems: Coping With the Congestion Issue}

\author{\IEEEauthorblockN{Trinh Van Chien$^{*}$, Emil Bj\"{o}rnson$^{*}$, and Hien Quoc Ngo$^{\dagger}$}
	\IEEEauthorblockA{$^{*}$Department of Electrical
		Engineering (ISY), Link\"{o}ping University, SE-581 83 Link\"{o}ping, Sweden\\
		$^{\dagger}$School of Electronics, Electrical Engineering and Computer Science, Queen's University Belfast, Belfast, UK \\
		trinh.van.chien@liu.se, emil.bjornson@liu.se, and hien.ngo@qub.ac.uk}
	\thanks{This paper was supported by ELLIIT and CENIIT. The work of H. Q. Ngo was supported by the UK Research and Innovation Future Leaders Fellowships under Grant MR/S017666/1.}
}

\maketitle

\begin{abstract}
One main goal of 5G-and-beyond systems is to simultaneously serve many users, each having a requested spectral efficiency (SE), in an energy-efficient way. The network capacity cannot always satisfy all the SE requirements, for example, when some users have bad channel conditions, especially happening in a cellular topology, and therefore congestion can happen. By considering both the pilot and data powers in the uplink transmission as optimization variables, this paper formulates and solves an energy-efficiency problem for cellular Massive MIMO (Multiple Input Multiple Output) systems that can handle the congestion issue. New algorithms based on the alternating optimization approach are proposed to obtain a fixed-point solution. Numerical results manifest that the proposed algorithms can provide the demanded SEs to many users even when the congestion happens.
\end{abstract}

\IEEEpeerreviewmaketitle

\section{Introduction}
\vspace*{-0.15cm}
Massive MIMO can provide large improvements in spectral efficiency (SE) and energy efficiency in the next-generation cellular systems by enabling multiple users to share the same time and frequency resource \cite{massivemimobook}. Different from current wireless systems, each base station (BS) in Massive MIMO systems can use a simple linear processing scheme such as maximum ratio combining (MRC) or regularized zero forcing to obtain good performance. Due to a large number of users, there will be plenty of inter-user interference that needs to be dealt with by appropriate resource allocation. In Massive MIMO, some of the resource allocation tasks (e.g., power control) can be done on the large-scale fading time scale instead of the conventional small-scale fading time scale, thanks to the channel hardening property \cite{massivemimobook}. This might make it feasible to implement advanced resource allocation algorithms in practical Massive MIMO systems.

Instead of scheduling users in time based on their long-term rate requirements, Massive MIMO systems can serve all users simultaneously and set the per-user SEs equal to that required by each user application \cite{zhang2014heterogeneous}. When there are given SE requirements, the resource allocation can be optimized to satisfy them with maximum energy efficiency. This was done in \cite{Chien2016b} where the total transmit power was minimized. However, since the users were randomly distributed, around $30\%$ of realizations of the user locations resulted in infeasible optimization problems due to poor channel conditions for some users \cite{Chien2016b, senel2017joint}. This congestion issue was handled in \cite{Chien2016b, Marzetta2016a} by ignoring the SE constraints and instead maximizing the minimum SE of the users, which is an entirely different problem that might lead to not satisfying any of the SE requirements.
 To the best of our knowledge, there is no previous work that deals with congestion due to infeasible SE constraints in Massive MIMO. 

In conventional multi-user MIMO communications, the congestion issue was discussed in \cite{Stridh2006a, sung2005generalized, pang2008distributed} and references therein.
A primal-dual decomposition was used in \cite{Stridh2006a} to iteratively identify and remove the user that interfers the most with the other users until the remaining SE requirements can be satisfied. While there were no power constraints in \cite{Stridh2006a}, the case with power constraints was considered in \cite{sung2005generalized, pang2008distributed}. These papers developed policies to decrease the requested SE of the users with poor channel conditions. 

These previous works consider perfect channel state information (CSI), so the impacts of imperfect CSI and pilot contamination are not considered, and the pilot powers are not optimized. Additionally, the SE is computed based on the instantaneous channel realizations (small-scale fading) which require the optimization problems to be solved often to combat the small-scale fading. In contrast, this paper tackles the congestion issue in cellular Massive MIMO systems by considering the ergodic SE. 
 Our main goal is to serve as many users  as possible with their requested SEs. We formulate a total uplink power minimization problem where both the pilot and data powers are optimization variables. To cope with the congestion issue originating from users with poor channel conditions, we propose a new algorithm that exploits the alternating optimization method with low computational complexity. We show numerically how the system can operate effectively also under congestion.
 

\textit{Notations}: The bold upper-case and lower-case letters denote matrices and vectors, respectively. $\mathbb{E} \{ \cdot \}$ is the expectation of a random variable. The transpose and Hermitian transpose are denoted as $(\cdot)^T$ and $(\cdot)^H$, respectively. Finally, $\mathcal{CN}(\cdot, \cdot)$ is the circularly symmetric Gaussian distribution.

\setcounter{eqnback}{\value{equation}} \setcounter{equation}{4}
\begin{figure*}
	\begin{equation} \label{eq:SINRlk}
	\mathrm{SINR}_{lk} = \frac{M (\beta_{lk}^l)^2 \tau_p p_{lk} \hat{p}_{lk} }{\left( \tau_p \sum_{l' =1}^L \beta_{l'k}^l  \hat{p}_{l'k}  + \sigma^2 \right)\left( \sum_{l'=1}^L \sum_{k'=1}^K   p_{l'k'} \beta_{l'k'}^l + \sigma^2 \right)  + M \tau_p \sum_{l' = 1, l' \neq l }^L  (\beta_{l'k}^l)^2  p_{l'k} \hat{p}_{l'k} }
	\end{equation} \vspace*{-0.2cm} 
	\hrule
	\vspace*{-0.5cm}
\end{figure*}
\setcounter{eqncnt}{\value{equation}}
\setcounter{equation}{\value{eqnback}}

\section{System Model} \label{Section: System Model}
\vspace*{-0.15cm}
We consider a multi-cell Massive MIMO system with $L$ cells, each having a BS equipped with $M$ antennas and serving $K$ users. Even though the radio channels vary over time and frequency, we assume that the radio resources are divided into coherence intervals of $\tau_c$ symbols where the channels are static and frequency flat. The channel between user~$k'$ in cell~$l'$ and BS~$l$, $\mathbf{h}_{l'k'}^l \in \mathbb{C}^{M},$ follows uncorrelated Rayleigh fading:
$\mathbf{h}_{l'k'}^l \sim \mathcal{CN}(\mathbf{0}, \beta_{l'k'}^l \mathbf{I}_{M}),$
where $\beta_{l'k'}^l$ is the large-scale fading coefficient. Each BS will estimate the realizations of the channels from itself to the intra-cell users in the uplink pilot training phase, which requires an overhead of $\tau_p \geq K$ symbols in each coherence interval.

\vspace*{-0.15cm}
\subsection{Uplink Pilot Training}
\vspace*{-0.15cm}
We assume that a set of $K$ orthogonal pilot signals $\{ \pmb{\psi}_{1}, \ldots, \pmb{\psi}_{K} \} $ is reused in every cell of the system with $\pmb{\psi}_{k} \in \mathbb{C}^{\tau_p}$ and $ \|\pmb{\psi}_{k} \|^2 = \tau_p$. Without loss of generality, we assume that every user~$k$ in each cell shares the pilot signal $\pmb{\psi}_{k}$. The received pilot signal, $\mathbf{Y}_l \in \mathbb{C}^{M \times \tau_p}$, at BS~$l$ is
\begin{equation} \label{eq:ReceivedPilotSig}
\mathbf{Y}_l = \sum_{l'=1}^L \sum_{k'=1}^K \sqrt{\hat{p}_{l'k'}}\mathbf{h}_{l'k'}^l \pmb{\psi}_{k'}^H + \mathbf{N}_l,
\end{equation}
where $\hat{p}_{l'k'}$ is the pilot power used by user~$k'$ in cell~$l'$ and $\mathbf{N}_l \in \mathbb{C}^{M \times \tau_p }$ is additive receiver noise with the elements independently distributed as $\mathcal{CN}(0, \sigma^2)$ with variance $\sigma^2$. The minimum mean square error estimate of $\mathbf{h}_{l'k'}^l$ can be obtained from \eqref{eq:ReceivedPilotSig} as \cite{Kay1993a}
\begin{equation}
\hat{\mathbf{h}}_{l'k'}^l = \frac{\sqrt{\hat{p}_{lk'}} \beta_{l'k'}^l}{\tau_p \sum_{l''=1}^{L} \hat{p}_{l'' k'} \beta_{l''k'}^l+\sigma^2} \mathbf{Y}_l \pmb{\psi}_{k'}.
\end{equation} 
These channel estimates will be used for linear combining of the uplink data signals.
\subsection{Uplink Data Transmission}
\vspace*{-0.15cm}
In the uplink data transmission phase, user~$k'$ in cell~$l'$ is transmitting a data symbol $s_{l'k'}$ with $\mathbb{E} \{ |s_{l'k'}|^2\} =1$. The received signal at BS~$l$, $\mathbf{y}_{l} \in \mathbb{C}^{M}$, is the superposition of transmitted signals from the $KL$ users in the system:
\begin{equation}
\mathbf{y}_{l} = \sum_{l'=1}^L \sum_{k'=1}^K \sqrt{p_{l'k'}} \mathbf{h}_{l'k'}^l s_{l'k'} + \mathbf{n}_{l},
\end{equation}
where $p_{l'k'}$ is the data power allocated by user~$k'$ in cell~$l'$ and $\mathbf{n}_{l} \in \mathbb{C}^{M}$ is the additive noise distributed as $\mathcal{CN} \left(\mathbf{0}, \sigma^2 \mathbf{I}_{M} \right)$. By utilizing MRC, a closed-form expression for the uplink ergodic SE of user~$k$ in cell~$l$ is  \cite[Corollary~$1$]{Chien2017a}:
\begin{equation} \label{eq:DLRate}
R_{lk} = \left(1 - \frac{\tau_p}{\tau_c} \right) \log_2 \left( 1 + \mathrm{SINR}_{lk}\right),
\end{equation}
where the effective signal-to-interference-and-noise ratio (SINR), $\mathrm{SINR}_{lk}$, is given by \eqref{eq:SINRlk}, shown at the top of this page.
The closed-form SE in \eqref{eq:DLRate} enables us to optimize the resource allocation without resorting to Monte-Carlo simulations. In particular, we will formulate an uplink power minimization problem that jointly optimizes both the pilot and data powers.

\section{Uplink Energy Consumption Minimization}
\vspace*{-0.15cm}
This section describes an uplink energy consumption minimization problem and then demonstrates its infeasibility for certain user locations, where users may not meet the requested SE due to interference and limited power budgets. We assume that user $k$ in cell $l$ requests an SE $\xi_{lk} >0$ and has a maximum transmit power budget of $P_{\max,lk}>0$.
\vspace*{-0.15cm}

\subsection{Problem Formulation}
\vspace*{-0.15cm}
Since $\tau_p$ pilots and $\tau_c-\tau_p$ data symbols are transmitted per coherence interval, the total energy consumption in cell~$l$ per sent for the pilot and data symbols as
\setcounter{eqnback}{\value{equation}} \setcounter{equation}{5}
\begin{equation} \label{eq:Etransl}
E_{l} =  \tau_p \sum_{k=1}^{K} \hat{p}_{lk} + (\tau_c - \tau_p) \sum_{k=1}^{K} p_{lk}.
\end{equation}
We minimize the total energy consumed of all cells subject to the requested SEs and limited power per symbol as
\begin{equation} \label{eq:General-Form-Optimization}
\begin{aligned}
& \underset{ \substack{\{\hat{p}_{lk}, p_{lk} \in  \mathbb{R}_{+} \} } }{\textrm{minimize}}
& & \sum_{ l=1 }^{L} E_{l} \\
& \,\,\textrm{subject to}
& &  R_{lk} \geq \xi_{lk},\quad \forall l,k ,\\
& &&  \hat{p}_{lk} \leq P_{\mathrm{max}, lk }, \quad \forall l,k,\\
&&& p_{lk} \leq  P_{\mathrm{max}, lk }, \quad \forall l,k,
\end{aligned}
\end{equation}
where $\xi_{lk}$ is the requested SE of user $k$ in cell $l$ and the ergodic SE in \eqref{eq:DLRate} must be larger or equal to it; $P_{\max,lk}$ is the maximum power that user~$k$ in cell~$l$ can supply to pilot and data symbols. By setting $\hat{\xi}_{lk} = 2^{\xi_{lk} \tau_c /(\tau_c - \tau_p)} -1 $, problem~\eqref{eq:General-Form-Optimization} is converted from having SEs constraints to the corresponding formulation with SINR constraints:
\begin{subequations} \label{eq:Opt1}
	\begin{alignat}{2}
	& \underset{ \substack{\{ \hat{p}_{lk}, p_{lk} \in \mathbb{R}_{+} \} }}{ \mathrm{minimize} }  && \quad  \tau_p \sum_{l=1 }^L \sum_{k=1 }^K \hat{p}_{lk} +  (\tau_c  - \tau_p )\sum_{l=1}^L \sum_{k=1}^K p_{lk} \label{P1:a} \\
	&  \,\,\text{subject to} && \quad \mathrm{SINR}_{lk} \geq \hat{\xi}_{lk}, \quad \forall l,k, \label{P1:b} \\
	&&&  \quad \hat{p}_{lk} \leq P_{\max, lk}, \quad\forall l,k \label{P1:c},\\
	&&&  \quad p_{lk} \leq P_{\max, lk}, \quad\forall l,k \label{P1:d}.
	\end{alignat}
\end{subequations}
For a given realization of user locations, problem~\eqref{eq:Opt1} may not have an optimal solution since not all users' SE requirements can be simultaneously satisfied. Notice that only one unfortunate user with an unsatisfied SE is sufficient to create an empty feasible domain for any resource allocation problem. The problem then lacks a feasible solution \cite[Section 4.1]{Boyd2004a}.
\setcounter{eqnback}{\value{equation}} \setcounter{equation}{9}
\begin{figure*}
	\begin{align} \label{eq:ablkp}
	{I}_{lk} \left([ \hat{\mathbf{p}};\mathbf{p}] \right) = \frac{ \left( \tau_p \sum_{l' =1}^L \beta_{l'k}^l  \hat{p}_{l'k}  + \sigma^2 \right)\left( \sum_{l'=1}^L \sum_{k'=1}^K   p_{l'k'} \beta_{l'k'}^l + \sigma^2 \right)  + M \tau_p \sum_{l' = 1, l' \neq l }^L  (\beta_{l'k}^l)^2  p_{l'k} \hat{p}_{l'k} }{  M (\beta_{lk}^l)^2 \tau_p \hat{\xi}_{lk}^{-1}}
	\end{align}
	\vspace*{-0.35cm}
	\hrulefill
	\vspace*{-0.25cm}
\end{figure*}
\setcounter{eqncnt}{\value{equation}}
\setcounter{equation}{\value{eqnback}}
\subsection{Optimal Solution to Problem~\eqref{eq:Opt1}}
\vspace*{-0.15cm}
If problem~\eqref{eq:Opt1} has a nonempty feasible set, the globally optimal solution is obtained as follows.
\begin{lemma} \label{Lemma:GeometricProgram}
Problem~\eqref{eq:Opt1} can be reformulated as a geometric program and, thus, the globally optimal solution can be obtained in polynomial time if the problem is feasible.
\end{lemma}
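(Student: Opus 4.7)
The plan is to put \eqref{eq:Opt1} into the standard form of a geometric program (GP)---posynomial objective together with posynomial-upper-bounded-by-one inequality constraints---and then invoke the classical fact that the substitution $\tilde p_{lk}=\ln p_{lk}$, $\tilde{\hat p}_{lk}=\ln \hat p_{lk}$ turns such a program into a convex one whose global optimum is computable in polynomial time by interior-point methods.

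I would check each piece of \eqref{eq:Opt1} in turn. The objective \eqref{P1:a} is already a nonnegative linear combination of the optimization variables, hence a posynomial. The budget inequalities \eqref{P1:c}--\eqref{P1:d} rescale to $\hat p_{lk}/P_{\max,lk}\le 1$ and $p_{lk}/P_{\max,lk}\le 1$, each a monomial upper-bounded by one. For the SINR requirement \eqref{P1:b}, I would use the equivalence $\mathrm{SINR}_{lk}\ge\hat{\xi}_{lk}\;\Longleftrightarrow\; I_{lk}([\hat{\mathbf p};\mathbf p])/(p_{lk}\hat p_{lk})\le 1$, with $I_{lk}$ defined in \eqref{eq:ablkp}; the remaining task is then to verify that this ratio is a posynomial in $(\hat{\mathbf p},\mathbf p)$.

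That verification is the most careful---though purely mechanical---step. Expanding the first summand of $I_{lk}$ produces terms of the forms $\hat p_{l'k}\,p_{l''k''}$, $\sigma^{2}\hat p_{l'k}$, $\sigma^{2}p_{l''k''}$, and $\sigma^{4}$, each carrying a strictly positive coefficient; the remaining term $M\tau_p\sum_{l'\neq l}(\beta_{l'k}^l)^2 p_{l'k}\hat p_{l'k}$ is likewise a sum of positive-coefficient monomials. Dividing through by the monomial $p_{lk}\hat p_{lk}$ distributes over the sum and preserves the monomial structure (GP permits arbitrary real exponents), and no cancellation can occur because every coefficient in $I_{lk}$ is strictly positive. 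This confirms that \eqref{eq:Opt1} fits the GP template, so the standard log--exp convexification yields the global optimum in polynomial time whenever the feasible set is nonempty. The main obstacle is really just this bookkeeping check on the sign pattern of $I_{lk}$; no analytical difficulty arises once the reciprocal reformulation of the SINR constraint is in place.
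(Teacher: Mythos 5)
Your proposal is correct and follows essentially the same route as the paper's proof: posynomial objective, monomial power constraints, and the SINR constraints rearranged into posynomial form (here made explicit as $I_{lk}([\hat{\mathbf p};\mathbf p])/(p_{lk}\hat p_{lk})\le 1$), followed by the standard log-transformation argument. You simply carry out the bookkeeping that the paper leaves implicit.
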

\begin{IEEEproof}
Problem~\eqref{eq:Opt1} can be written as a geometric program on standard form \cite{Boyd2004a}. In \eqref{eq:Opt1}, the objective function is a linear combination of the uplink transmit powers, so it is a posynomial.\footnote{A function $h(x_1, \ldots, x_{N_1}) = \sum_{n=1}^{N_2}c_n \prod_{m=1}^{N_1} x_m^{a_{n,m}}$ is posynomial with $N_2$ terms ($N_2 \geq 2$) if the coefficients $a_{n,m}$  are real numbers and the coefficients $c_n$ are nonnegative real numbers. When $N_2 =1$, $h(x_1, \ldots, x_{N_1})$ is a monomial.} The power constraints are monomials, while the SINR expressions can be rearranged as posynomial constraints.
\end{IEEEproof}
Notice that geometric programs have a hidden convex structure \cite{Boyd2004a} that enables them to be efficiently solved in the centralized fashion, for example, by using the interior-point toolbox CVX \cite{cvx2015}. 

\subsection{Feasible and Infeasible Problems}
\vspace*{-0.1cm}
If the problem~\eqref{eq:Opt1} is feasible, all users will use non-zero data and pilot powers at the optimum solution since their SE requirements are assumed to be non-zero. For a given set of users, SE requirements, and power budgets, problem~\eqref{eq:Opt1} may not have a feasible solution. This can be caused by high inter-user interference, that some users might have too weak channel conditions, and/or that some users requests too high SE. If \eqref{eq:Opt1} is infeasible it cannot be solved by any method. 

However, there might still exists feasible selections of the transmit powers where most users obtain their requested SE, while only one or a few users does not. It might be enough to remove one user to turn an infeasible problem into a feasible problem, but it is nontrivial to identify which user to remove. If one attempts to solve an infeasible instance of problem~\eqref{eq:Opt1} using a standard convex optimization solver, the output might not give us any clue. The goal of this paper is to develop power control policies for such infeasible situations.

	\begin{figure*}
	\setcounter{eqnback}{\value{equation}} \setcounter{equation}{13}
	\begin{equation} \label{eq:Subtraction}
	\alpha \alpha' I_{lk}\left([ \hat{\mathbf{p}};\mathbf{p}] \right) - I_{lk}\left([ \alpha \hat{\mathbf{p}}; \alpha' \mathbf{p}] \right) =  \frac{\alpha\left( \alpha' -  1\right) \sigma^2 \tau_p \sum_{l' =1}^L \beta_{l'k}^l  \hat{p}_{l'k} +  \alpha' \left( \alpha - 1 \right) \sigma^2 \sum_{l'=1}^L \sum_{k'=1}^K   p_{l'k'} \beta_{l'k'}^l + \left( \alpha \alpha' - 1 \right) \sigma^4 }{ M (\beta_{lk}^l)^2 \tau_p \hat{\xi}_{lk}^{-1}}
	\end{equation}
	\hrule
	\vspace*{-0.25cm}
	\setcounter{eqncnt}{\value{equation}}
	\setcounter{equation}{\value{eqnback}}
\end{figure*}
\section{Alternating Optimization Approach} 
\vspace*{-0.15cm}
This section proposes an algorithm to obtain a fixed-point solution to a modified version of problem~\eqref{eq:Opt1}, where the SINR constraints of users potentially causing the congestion issue are relaxed. 
\subsection{Preliminary Setting}
\vspace*{-0.15cm}
Before providing the algorithms, Definition~\ref{Def:Succeq} introduces important notation that will be used in this paper.
\begin{definition} \label{Def:Succeq}
	Let $\mathbf{a}$ and $\mathbf{a}'$ be $KL\times 1$ real vectors whose $m$-th element is denoted by $a_m$ and $a_m'$. The notation $\mathbf{a} \succeq \mathbf{a}'$ implies element-wise inequality: $a_m \geq a_m',$  $\forall m =1,\ldots, KL$.
\end{definition}
For sake of simplicity in comprehension, we first consider the case when problem~\eqref{eq:Opt1} is feasible so that all the requested SINR constraints are satisfied using power variables that satisfy the power budget. For a particular cell~$l$, the SINR constraint of user~$k$ in \eqref{P1:b} can be reformulated as a function of the pilot and data powers $\hat{p}_{lk}, p_{lk}$ as 
\begin{equation} \label{eq:InterferenceContraint}
\hat{p}_{lk} p_{lk} \geq {I}_{lk} \left( [\hat{\mathbf{p}}; \mathbf{p} ] \right),
\end{equation}
where $\mathbf{p} = [p_{11}, \ldots,p_{LK} ]^T$, $ \hat{\mathbf{p}} = [\hat{p}_{11}, \ldots, \hat{p}_{LK}]^T \in \mathbb{R}_{+}^{LK},$ and the concatenated vector $[\hat{\mathbf{p}}; \mathbf{p} ] \in \mathbb{R}_{+}^{2LK} $; ${I}_{lk} \left([\hat{\mathbf{p}}; \mathbf{p}] \right)$ is defined in \eqref{eq:ablkp} shown at the top of this page. Since we are jointly optimizing both the pilot and data powers, the  conventional power control method based on the standard interference function, for example \cite{Yates1995a}, cannot be directly applied to our framework, which handles \eqref{eq:InterferenceContraint} and attain a fixed point. Therefore, we introduce the so-called joint interference functions to enable the joint pilot and data power allocation as shown in Definition~\ref{Def:StandardFunc}.
\begin{definition} (Joint interference function) \label{Def:StandardFunc}
A function $I\left([\mathbf{a}; \mathbf{a}' ] \right)$ is a joint interference function, if the following properties are satisfied: $a)$ Positivity: $I \left([\mathbf{a}; \mathbf{a}' ] \right) > \mathbf{0}$, $\forall [\mathbf{a}; \mathbf{a}' ] \succeq \mathbf{0}$. $b)$ Monotonicity: $I\left([\tilde{\mathbf{a}}; \tilde{\mathbf{a}}' ] \right) \geq I\left([\mathbf{a}; \mathbf{a}' ] \right)$ if $[\tilde{\mathbf{a}}; \tilde{\mathbf{a}}' ] \succeq [\mathbf{a}; \mathbf{a}' ]$. $c)$ Scalability: $\alpha\alpha' I([\mathbf{a}; \mathbf{a}' ]) > I \left( [\alpha \mathbf{a}; \alpha' \mathbf{a}'] \right)$, for all constants $\alpha, \alpha' >1$.
\end{definition}
The positivity property comes from the inherent interference and noise in radio systems. Consequently, the transmit powers cannot be zero if users have non-zero SE requirements. The monotonicity property implies that the joint interference function can be scaled up or down by adjusting the power vector. Different from \cite{Yates1995a}, the scalability property in Definition~\ref{Def:StandardFunc} provides a method to uniformly scale down  the joint interference function with a factor $\alpha \alpha'$ when updating the product of the pilot and data powers $\hat{p}_{lk}p_{lk}$ from any initial selection of $\{ \hat{p}_{lk}, p_{lk} \}$ in the feasible domain. 
\begin{lemma} \label{lemma:StandardFuc}
	Assume that each base station serves at least one user, each interference function $I_{lk} \left( [ \hat{\mathbf{p}}; \mathbf{p}] \right), \forall l,k,$ defined in \eqref{eq:ablkp} is a joint interference function.
\end{lemma}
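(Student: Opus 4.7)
The plan is to verify the three defining properties of a joint interference function (positivity, monotonicity, scalability from Definition~\ref{Def:StandardFunc}) directly from the closed-form expression of $I_{lk}$ in \eqref{eq:ablkp}. The first two are essentially inspection; the third is the substantive step, and it is made almost mechanical by the subtraction identity that is already displayed in \eqref{eq:Subtraction}.

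For positivity, I would note that the denominator $M(\beta_{lk}^l)^2\tau_p\hat{\xi}_{lk}^{-1}$ is strictly positive because BS $l$ serves user $k$ (so $\beta_{lk}^l>0$ by hypothesis), the SINR target $\hat{\xi}_{lk}$ is finite and nonzero, and $M,\tau_p>0$. Expanding the first product in the numerator produces a $\sigma^2\cdot\sigma^2=\sigma^4$ term coming from the noise of the pilot and data phases together, and all remaining summands are non-negative for any $[\hat{\mathbf{p}};\mathbf{p}]\succeq\mathbf{0}$. Hence $I_{lk}([\hat{\mathbf{p}};\mathbf{p}])\geq \sigma^4/(M(\beta_{lk}^l)^2\tau_p\hat{\xi}_{lk}^{-1})>0$.

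For monotonicity, I would expand the numerator of \eqref{eq:ablkp} as a sum of (i) a constant, (ii) terms linear in a single $\hat{p}_{l'k}$ or $p_{l'k'}$, and (iii) bilinear terms of the form $p_{l'k'}\hat{p}_{l''k}$. All coefficients are non-negative (products of $\beta$'s, $\tau_p$, $M$, and $\sigma^2$), so if $[\tilde{\hat{\mathbf{p}}};\tilde{\mathbf{p}}]\succeq[\hat{\mathbf{p}};\mathbf{p}]$ element-wise, every summand can only grow. The denominator does not depend on the powers, so $I_{lk}$ is coordinatewise non-decreasing.

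The main step is scalability, and here the key observation is that the cross term $M\tau_p\sum_{l'\neq l}(\beta_{l'k}^l)^2 p_{l'k}\hat{p}_{l'k}$ scales by exactly $\alpha\alpha'$ under the substitution $(\hat{\mathbf{p}},\mathbf{p})\mapsto(\alpha\hat{\mathbf{p}},\alpha'\mathbf{p})$, so it cancels in the difference $\alpha\alpha' I_{lk}([\hat{\mathbf{p}};\mathbf{p}])-I_{lk}([\alpha\hat{\mathbf{p}};\alpha'\mathbf{p}])$. What remains are precisely the noise-induced terms, yielding identity \eqref{eq:Subtraction}, whose numerator is a sum of three contributions with coefficients $\alpha(\alpha'-1)$, $\alpha'(\alpha-1)$, and $(\alpha\alpha'-1)$, all strictly positive when $\alpha,\alpha'>1$, and with the $\sigma^4$ term alone already being strictly positive regardless of the power entries. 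Dividing by the positive denominator gives $\alpha\alpha' I_{lk}([\hat{\mathbf{p}};\mathbf{p}])>I_{lk}([\alpha\hat{\mathbf{p}};\alpha'\mathbf{p}])$. The only real obstacle is recognizing that the pilot--data bilinear cross term is what forces the joint (product) scaling $\alpha\alpha'$ rather than a single-factor scaling as in the classical Yates framework; once this cancellation is seen, the rest is algebra already done in \eqref{eq:Subtraction}.
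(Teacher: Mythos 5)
Your proposal is correct and follows essentially the same route as the paper: positivity via the noise floor $\hat{\xi}_{lk}\sigma^4/(M(\beta_{lk}^l)^2\tau_p)$, monotonicity from the non-negative coefficients in the numerator, and scalability from the cancellation of the bilinear pilot--data terms leading exactly to the identity in \eqref{eq:Subtraction}. The only difference is expository --- you explain \emph{why} the residual in \eqref{eq:Subtraction} contains only noise terms, which the paper leaves implicit.
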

\begin{proof}
	The proof will testify that all the functions $I_{lk} \left( [\hat{\mathbf{p}}; \mathbf{p} ] \right)$ satisfy the properties in Definition~\ref{Def:StandardFunc}. Specifically, the positivity holds since 
	\setcounter{eqnback}{\value{equation}} \setcounter{equation}{10}
	\begin{equation}
	I_{lk} \left( [\hat{\mathbf{p}}; \mathbf{p} ] \right) \stackrel{(a)}{\geq}  \frac{\hat{\xi}_{lk}\sigma^4}{ M (\beta_{lk}^l)^2 \tau_p} >0, 
	\end{equation}
	where the equality in $(a)$ holds when $\hat{p}_{lk} = p_{lk} = 0, \forall l,k$. Moreover, the numerator of $I_{lk} \left( [\hat{\mathbf{p}}; \mathbf{p} ] \right)$ is an increasing function of the optimization variables $\hat{p}_{lk}$ and $p_{lk}, \forall l,k$, thus for two given concatenated vectors $[ \hat{\mathbf{p}}_1; \mathbf{p}_1]$ and $[ \hat{\mathbf{p}}_2; \mathbf{p}_2]$ with  $[ \hat{\mathbf{p}}_1; \mathbf{p}_1] \succeq [ \hat{\mathbf{p}}_2; \mathbf{p}_2]$, it holds that
	\begin{equation}
	 I_{lk} \left([ \hat{\mathbf{p}}_1; \mathbf{p}_1] \right) \geq I_{lk} \left([ \hat{\mathbf{p}}_2; \mathbf{p}_2] \right),
	\end{equation}
	and therefore the monotonicity is proved. Finally, the scalability holds since
	\begin{equation}
	 \alpha \alpha' I_{lk} \left( [\hat{\mathbf{p}}; \mathbf{p}] \right) - I_{lk}( [\alpha \hat{\mathbf{p}}; \alpha' \mathbf{p}]) > 0, \forall \alpha, \alpha' > 1,
	\end{equation}
	as shown in \eqref{eq:Subtraction} at the top of this page, with noticing that $\alpha \alpha' \geq 1$. This completes the proof.
\end{proof}

\begin{lemma}
Assume that $0 \leq I_{lk}\left([ \hat{\mathbf{p}}; \mathbf{p}] \right) \leq P_{\max,lk}^2$ is always true for $\forall \hat{\mathbf{p}}, \mathbf{p}$ in the feasible domain. For a given concatenated vector $[\hat{\mathbf{p}}(0); \mathbf{p}(0)]$ with $\hat{p}_{lk} (0) = \hat{p}_{lk} (0) = P_{\max,lk}, \forall l,k$, there exist pilot and power coefficients such that $I_{lk}\left([ \hat{\mathbf{p}} (n); \mathbf{p} (n)] \right)$ is a non-increasing function of the power vectors along iteration index, which converges to a fixed point.
\end{lemma}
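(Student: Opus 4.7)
The plan is to design an explicit iteration by exploiting the product structure of the SINR constraint \eqref{eq:InterferenceContraint} together with the three joint interference function properties established in Lemma~\ref{lemma:StandardFuc}. A natural symmetric update rule is
\begin{equation}
\hat{p}_{lk}(n+1) = p_{lk}(n+1) = \sqrt{I_{lk}\bigl([\hat{\mathbf{p}}(n);\mathbf{p}(n)]\bigr)}, \quad \forall l,k,
\end{equation}
which makes the SINR constraint active, $\hat{p}_{lk}(n+1)\, p_{lk}(n+1) = I_{lk}([\hat{\mathbf{p}}(n);\mathbf{p}(n)])$, while dividing the required product evenly between pilot and data powers. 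Other splits (e.g., a fixed ratio $\hat{p}_{lk}/p_{lk}=\gamma$) would work analogously; the existence claim in the lemma is then witnessed by this construction.

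Next, I would prove by induction on $n$ that $[\hat{\mathbf{p}}(n+1);\mathbf{p}(n+1)] \preceq [\hat{\mathbf{p}}(n);\mathbf{p}(n)]$ in the sense of Definition~\ref{Def:Succeq}. The base case follows from the standing hypothesis $I_{lk}([\hat{\mathbf{p}}(0);\mathbf{p}(0)]) \leq P_{\max,lk}^2$, which yields $\hat{p}_{lk}(1)=p_{lk}(1) = \sqrt{I_{lk}(\cdot)} \leq P_{\max,lk} = \hat{p}_{lk}(0)=p_{lk}(0)$. For the inductive step, the inductive hypothesis combined with the monotonicity property from Lemma~\ref{lemma:StandardFuc} gives $I_{lk}([\hat{\mathbf{p}}(n+1);\mathbf{p}(n+1)]) \leq I_{lk}([\hat{\mathbf{p}}(n);\mathbf{p}(n)])$, and taking square roots propagates the inequality to the next power iterate. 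The very same monotonicity inequality directly yields the claimed non-increasing behavior of $I_{lk}([\hat{\mathbf{p}}(n);\mathbf{p}(n)])$ along $n$.

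Convergence then follows from the Monotone Convergence Theorem: for each $l,k$ the scalar sequence $I_{lk}([\hat{\mathbf{p}}(n);\mathbf{p}(n)])$ is non-increasing and bounded below by the strictly positive constant $\hat{\xi}_{lk}\sigma^4/(M(\beta_{lk}^l)^2 \tau_p)$ that appeared in the positivity bound in the proof of Lemma~\ref{lemma:StandardFuc}. The coupled power sequences $\hat{p}_{lk}(n)$ and $p_{lk}(n)$ therefore converge as well. Since $I_{lk}$ is a rational and hence continuous function of the powers, passing to the limit gives $\hat{p}_{lk}^{\star} p_{lk}^{\star} = I_{lk}([\hat{\mathbf{p}}^{\star};\mathbf{p}^{\star}])$, so the limit point is a fixed point of the iteration at which every SINR constraint in \eqref{P1:b} is active.

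The main obstacle is the ambiguity in splitting the required product $\hat{p}_{lk}(n+1) p_{lk}(n+1)$ into two individual non-negative powers subject to the individual budgets $\hat{p}_{lk}, p_{lk} \leq P_{\max,lk}$; without care, one factor could exceed $P_{\max,lk}$ even when $I_{lk} \leq P_{\max,lk}^2$. The symmetric $\sqrt{\cdot}$ split sidesteps this because $I_{lk}\leq P_{\max,lk}^2$ is preserved along the iteration by monotonicity. Scalability (Definition~\ref{Def:StandardFunc}) is the structural ingredient that validates this coupled downscaling: it is precisely what prevents the joint pilot-and-data update from stalling and distinguishes it from the classical single-variable Yates iteration, in which only one power per user is tuned.
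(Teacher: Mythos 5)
Your proof is correct and follows the same skeleton as the paper's: induction on the iteration index, with the monotonicity property of $I_{lk}$ from Lemma~\ref{lemma:StandardFuc} driving the chain $I_{lk}([\hat{\mathbf{p}}(n-1);\mathbf{p}(n-1)]) \geq I_{lk}([\hat{\mathbf{p}}(n);\mathbf{p}(n)])$. The one substantive difference is that the paper only shows the \emph{products} $\hat{p}_{lk}(n)p_{lk}(n)$ are non-increasing and then asserts the existence of element-wise ordered vectors $[\hat{\mathbf{p}}(n+1);\mathbf{p}(n+1)] \preceq [\hat{\mathbf{p}}(n);\mathbf{p}(n)]$ without exhibiting them; a non-increasing product does not by itself force both factors to be non-increasing under an arbitrary split. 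Your explicit symmetric witness $\hat{p}_{lk}(n+1)=p_{lk}(n+1)=\sqrt{I_{lk}([\hat{\mathbf{p}}(n);\mathbf{p}(n)])}$ closes that gap cleanly, and your base case correctly uses the hypothesis $I_{lk}\leq P_{\max,lk}^2$ to stay within the power budget. You also supply the convergence argument (monotone bounded sequence plus continuity of $I_{lk}$ to identify the limit as a fixed point), which the paper leaves implicit. One small caveat: scalability is not actually needed anywhere in your argument (positivity and monotonicity suffice for existence of the fixed point); in the Yates framework scalability is what delivers \emph{uniqueness}, so the closing remark slightly overstates its role here.
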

\begin{proof}
The proof is done by induction. We start updating the powers with $\hat{\mathbf{p}}(0), \mathbf{p}(0)$ and iteration~$n$ updates the pilot and data power variables as
\setcounter{eqnback}{\value{equation}} \setcounter{equation}{14}
\begin{equation} \label{eq:Updatepphat}
\hat{p}_{lk} (n) p_{lk} (n) = {I}_{lk} \left( [ \hat{\mathbf{p}} (n-1), \mathbf{p} (n-1)] \right).
\end{equation}
The range of $I_{lk}\left([ \hat{\mathbf{p}}; \mathbf{p}]\right)$ as in the lemma ensures that there exists a concatenated vector $[ \hat{\mathbf{p}}(1), \mathbf{p}(1)]$ satisfies \eqref{eq:Updatepphat} and 
\begin{equation} \label{eq:Condition}
[\hat{\mathbf{p}}(0),\mathbf{p}(0)]  \succeq [\hat{\mathbf{p}}(1),\mathbf{p}(1)].
\end{equation}
We suppose that the fact \eqref{eq:Condition} holds up to iteration~$n$, i.e., $[\hat{\mathbf{p}} (n-1); \mathbf{p} (n-1)] \succeq [\hat{\mathbf{p}} (n); \mathbf{p} (n)]$, then we must prove the existence of a vector $[\hat{\mathbf{p}} (n+1); \mathbf{p} (n+1)]$ and $[\hat{\mathbf{p}} (n); \mathbf{p} (n)] \succeq [\hat{\mathbf{p}} (n+1); \mathbf{p} (n+1)]$, for which $I_{lk} \left( [\hat{\mathbf{p}} (n-1); \mathbf{p} (n-1)] \right) \geq I_{lk} \left( [\hat{\mathbf{p}} (n); \mathbf{p} (n)] \right)$. Indeed, the following series of the inequalities holds
\begin{equation} \label{eq:Seriesinequality}
\begin{split}
&\hat{p}_{lk} (n) p_{lk} (n) \stackrel{(a)}{=} {I}_{lk} \left( [\hat{\mathbf{p}} (n-1);\mathbf{p} (n-1)]  \right) \\
& \stackrel{(b)}{\geq}  {I}_{lk} \left( [ \hat{\mathbf{p}} (n); \mathbf{p} (n)]  \right) \stackrel{(c)}{=} \hat{p}_{lk} (n+1) p_{lk} (n+1),
\end{split}
\end{equation}
where $(a)$ and $(c)$ are based on the update in \eqref{eq:Updatepphat}, while $(b)$ is based on the monotonicity. From \eqref{eq:Seriesinequality}, the bounds $\hat{p}_{lk} (n) p_{lk} (n) \geq \hat{p}_{lk} (n+1) p_{lk} (n+1), \forall l,k,$ give the existence of $[\hat{\mathbf{p}} (n+1); \mathbf{p} (n+1)]$ and we complete the proof.
\end{proof}
Every user~$k$ in cell~$l$ has a joint interference function $I_{lk} \left( [\hat{\mathbf{p}}; \mathbf{p}] \right)$ satisfying the properties in Definition~\ref{Def:StandardFunc}, which makes sure that we can construct an iterative algorithm that converges to a fixed-point solution to the initial point of pilot and data powers as explained in the next subsection.

\subsection{Solution to Problem~\eqref{eq:Opt1} based on Alternating Optimization}
\vspace*{-0.2cm}
Notice that the aforementioned analysis works on the assumption that problem~\eqref{eq:Opt1} is feasible. The power budget constraints  \eqref{P1:c} and \eqref{P1:d} are handled by observing the obvious convergence of an update $\hat{p}_{lk} (n) p_{lk} (n) = P_{\max,lk}^2, \forall n$ leading to the selection $\hat{p}_{lk} (n) = p_{lk} (n) = P_{\max,lk}, \forall n$. We then define the constrained joint interference function for user~$k$ in cell~$l$ as
\begin{equation} \label{eq:constrainedIlk}
\begin{split}
 &\hat{I}_{lk} \left( [\hat{\mathbf{p}} (n-1); \mathbf{p} (n-1)] \right) \\
 & = \min \left( I_{lk} \left( [\hat{\mathbf{p}} (n-1); \mathbf{p} (n-1)] \right), P_{\max,lk}^2 \right).
\end{split}
\end{equation}
For a cellular Massive MIMO system with the interference constraints in \eqref{eq:InterferenceContraint} and the initial power vector $\mathbf{p}(0)$ with $p_{lk}(0) = \hat{p}_{lk}(0) = P_{\max, lk}, \forall l,k,$ iteration~$n$ can update
\begin{equation} \label{eq:UpdateN}
\hat{p}_{lk} (n) p_{lk} (n) = \hat{I}_{lk} \left( [\hat{\mathbf{p}} (n-1); \mathbf{p} (n-1)] \right).
\end{equation}
From \eqref{eq:constrainedIlk} and \eqref{eq:UpdateN}, if $\hat{I}_{lk} \left( [\hat{\mathbf{p}} (n-1); \mathbf{p} (n-1)] \right) =  P_{\max,lk}^2$,  the current pilot and data powers are updated at iteration~$n$ as
\begin{equation}
\hat{p}_{lk} (n) =  p_{lk} (n) = P_{\max, lk} ,
\end{equation}
which maintains the non-increasing property of the objective function in \eqref{P1:a}. Otherwise, it holds that $\hat{I}_{lk} \left( [\hat{\mathbf{p}} (n-1); \mathbf{p} (n-1)] \right) = I_{lk} \left( [\hat{\mathbf{p}} (n-1); \mathbf{p} (n-1)] \right)$ and we must find a proper solution to $\hat{p}_{lk} (n)$ and $p_{lk} (n)$ that also satisfies the power budget constraints \eqref{P1:c} and \eqref{P1:d}. Mathematically, the solution to the pilot and data powers of user~$k$ in cell~$l$, which fulfills the above requirements, is attained by solving the following optimization problem
\begin{equation} \label{eq:v1}
\begin{aligned}
 &  \underset{ \substack{\hat{p}_{lk}, p_{lk} \in  \mathbb{R}_{+} } }{\mathrm{minimize}}
& & \tau_p  \hat{p}_{lk} +  (\tau_c - \tau_p) p_{lk} \\
& \,\,\textrm{subject to}
&&  \hat{p}_{lk} p_{lk} = {I}_{lk} ( [\hat{\mathbf{p}} (n-1); \mathbf{p} (n-1)]), \\
&&& \tau_p  \hat{p}_{lk} +  (\tau_c - \tau_p) p_{lk} \leq  t(n-1), \\
&&&  \hat{p}_{lk} \leq P_{\mathrm{max}, lk },\\
&&& p_{lk} \leq  P_{\mathrm{max}, lk },
\end{aligned}
\end{equation}
where $ t(n-1) = \tau_p  \hat{p}_{lk}(n-1) +  (\tau_c - \tau_p) p_{lk}(n-1)$.
\begin{algorithm}[t]
	\caption{Joint pilot and data power control to problem~\eqref{eq:Opt1} by alternating optimization} \label{Algorithm1}
	\textbf{Input}:  Define maximum powers $P_{\max,lk}, \forall l,k$; Select initial values $\hat{p}_{lk}(0) = p_{lk}(0) = P_{\max,lk}, \forall l,k$; Compute the energy $E_{l}(0), \forall l,$ using \eqref{eq:Etransl}; Set initial value $n=1$ and tolerance $\epsilon$.
	\begin{itemize}
			\item[1.] User~$k$ in cell~$l$ computes the joint interference function $	{I}_{lk} \left([ \hat{\mathbf{p}} (n);\mathbf{p} (n)] \right)$ using \eqref{eq:ablkp}.
			\item[2.] If ${I}_{lk} \left([ \hat{\mathbf{p}} (n);\mathbf{p} (n)] \right) > P_{\max,lk}^2$, update $\hat{p}_{lk}(n) =  p_{lk}(n) = P_{\max,lk}$. Otherwise, update $\hat{p}_{lk}(n), p_{lk}(n)$ by solving problem \eqref{eq:v1}.
			\item[3.] Repeat Steps $1,2$ with other users, then compute the ratio \fontsize{9}{9}{$\gamma (n) =$ $| \sum_{l=1}^L E_{l}(n) - E_{l}(n-1) | /  \sum_{l=1}^L \ E_{l}(n-1)$}.
			\item[4.] If $\gamma_l (n) \leq \epsilon$ $\rightarrow$ Set $\hat{p}_{lk}^{\ast} = \hat{p}_{lk}(n),$ $p_{lk}^{\ast} = p_{lk}(n),$ and Stop. Otherwise, set $n= n+1$ and go to Step $1$.
	\end{itemize}
	\textbf{Output}: A fixed point $ \hat{p}_{lk}^{\ast}$ and $ p_{lk}^{\ast}$, $\forall l,k$. \vspace*{-0.2cms}
\end{algorithm}
Since problem~\eqref{eq:v1} is convex with only the two optimization variables, the global optimal solution can be obtained easily by using CVX with an interior-point solver or, alternatively, by a two-dimensional grid search with a given accuracy. Furthermore, this problem only minimizes the total energy consumption of a specific user~$k$ in cell~$l$, thus it basically provides a local solution of $\hat{p}_{lk}$ and $p_{lk} $ to problem~\eqref{eq:Opt1}. The proposed alternating approach obtaining a fixed point to problem~\eqref{eq:Opt1} is summarized in Algorithm~\ref{Algorithm1}. The convergence of this algorithm can be proved by utilizing the methodology in \cite[Theorem~7]{Yates1995a}. We stress that the proposed algorithm can be applied to both feasible and infeasible systems. 
\begin{figure*}
	\setcounter{eqnback}{\value{equation}} \setcounter{equation}{22}
	\begin{align} \label{eq:ablkd}
	{I}_{dlk} \left(\mathbf{p} \right) =\ \frac{ \left( \tau_p \sum_{l' =1}^L \beta_{l'k}^l  \hat{p}_{l'k}  + \sigma^2 \right)\left( \sum_{l'=1}^L \sum_{k'=1}^K   p_{l'k'} \beta_{l'k'}^l + \sigma^2 \right)  + M \tau_p \sum_{l' = 1, l' \neq l }^L  (\beta_{l'k}^l)^2  p_{l'k} \hat{p}_{l'k} }{  M (\beta_{lk}^l)^2 \tau_p \hat{p}_{lk} \hat{\xi}_{lk}^{-1}}
	\end{align} \vspace*{-0.2cm}
	\hrule
	\setcounter{eqncnt}{\value{equation}}
	\setcounter{equation}{\value{eqnback}}
	\vspace*{-0.25cm}
\end{figure*}
\begin{remark}
Algorithm~\ref{Algorithm1} does not explicitly exploit the hidden convex structure for a low complexity algorithm design and problem~\eqref{eq:v1} only focuses on minimizing the total energy consumption for a particular user each time, thus a fixed-point solution to the pilot and data powers is not identical to the globally optimal solution in general. However, the total energy consumption between them has small difference as shown by numerical results in Section~\ref{Sec:NumericalResults}.
\end{remark}

\subsection{Uplink Energy Minimization for Data Power Control Only}
\vspace*{-0.1cm}
For sake of completeness, we also study the case of which only the data powers are optimized. If the feasible set is not empty, the optimization structure is given in Corollary~\ref{Co:PilotorData}.
\begin{corollary} \label{Co:PilotorData}
	If the system only optimizes the data powers, problem~\eqref{eq:Opt1} reduces to a linear program, for which the constraints~\eqref{P1:c} are removed.
\end{corollary}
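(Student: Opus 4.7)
The plan is to fix the pilot powers $\hat{p}_{lk}$ to some constants (e.g., $\hat{p}_{lk} = P_{\max,lk}$) and show that every remaining ingredient of problem~\eqref{eq:Opt1} becomes affine in the data powers $\{p_{lk}\}$. First I would treat the objective in \eqref{P1:a}: with the $\hat{p}_{lk}$ regarded as constants, the term $\tau_p \sum_{l,k}\hat{p}_{lk}$ is a constant that can be dropped from the minimization, leaving $(\tau_c - \tau_p)\sum_{l,k} p_{lk}$, which is linear in the decision variables. The data power budgets \eqref{P1:d} are already linear, and the pilot budgets \eqref{P1:c} no longer involve optimization variables, so they are either vacuously satisfied by the fixed choice or otherwise discarded, exactly as the corollary claims.

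The main step is to show that each SINR constraint~\eqref{P1:b} reduces to a linear inequality in $\mathbf{p}$. Starting from $\mathrm{SINR}_{lk} \geq \hat{\xi}_{lk}$ with $\mathrm{SINR}_{lk}$ given in \eqref{eq:SINRlk}, I would cross-multiply by the (strictly positive) denominator to obtain
\begin{align*}
M(\beta_{lk}^l)^2 \tau_p\, p_{lk}\,\hat{p}_{lk}
\;\geq\; \hat{\xi}_{lk}\Biggl[&\Bigl(\tau_p \textstyle\sum_{l'} \beta_{l'k}^l \hat{p}_{l'k} + \sigma^2\Bigr)\Bigl(\textstyle\sum_{l',k'} p_{l'k'}\beta_{l'k'}^l + \sigma^2\Bigr) \\
&+ M\tau_p \textstyle\sum_{l'\neq l}(\beta_{l'k}^l)^2 p_{l'k}\hat{p}_{l'k}\Biggr].
\end{align*}
With the pilot powers fixed, the coefficient $\tau_p \sum_{l'}\beta_{l'k}^l\hat{p}_{l'k} + \sigma^2$ in front of the data-power sum is a known positive constant, and every other term on the right-hand side is either constant or linear in some $p_{l'k'}$. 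Rearranging all terms involving data powers to one side yields an affine inequality in $\{p_{l'k'}\}$, so \eqref{P1:b} reduces to a finite collection of linear constraints.

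Combining the three observations—a linear objective, linear SINR constraints, and linear box constraints on the data powers with the $\hat{p}_{lk}$-only constraints vanishing—the resulting optimization problem is a linear program in $\{p_{lk}\}$, which proves the corollary. I do not anticipate any real obstacle; the only point that deserves a brief sentence is the justification that multiplying through by the SINR denominator preserves equivalence, which follows from the denominator being a sum of nonnegative terms plus $\sigma^4 > 0$ and hence strictly positive on the feasible domain.
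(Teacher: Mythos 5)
Your proof is correct and follows exactly the argument the paper leaves implicit (the corollary is stated without proof as an immediate consequence): with the pilot powers fixed, the objective is affine in $\{p_{lk}\}$, cross-multiplying each SINR constraint by its strictly positive denominator yields an affine inequality in $\{p_{lk}\}$ because the only bilinear terms $p_{l'k}\hat{p}_{l'k}$ become linear, and the remaining budget constraints are linear boxes while \eqref{P1:c} becomes vacuous. Nothing is missing.
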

Corollary~\ref{Co:PilotorData} shows that when optimizing the data powers, problem~\eqref{eq:Opt1} becomes a convex problem on standard form, thus the use of an alternating approach is guaranteed to obtain the globally optimal solution. Moreover, to perform the power control for both feasible and infeasible systems, we utilize the standard interference function from \cite{Yates1995a}.
\begin{definition} (Standard interference function) \label{Def:StandardFuncvd}
	A function $I\left(\mathbf{a} \right)$ is a standard interference function, if the following properties are satisfied: $a)$ Positivity: $I \left(\mathbf{a} \right) >  \mathbf{0}, \forall  \mathbf{a} \succeq \mathbf{0}$. $b)$ Monotonicity: $I\left( \tilde{\mathbf{a}} \right) > I\left(\mathbf{a} \right)$ if $\tilde{\mathbf{a}} \succeq \mathbf{a}$. $c)$ Scalability: $\alpha I(\mathbf{a}) > I \left( \alpha \mathbf{a}\right)$, for all constants $\alpha >1$.
\end{definition}
By removing the fixed terms representing the energy consumption of the uplink pilot training, the optimization problem is defined to minimize the energy consumption of the uplink data transmission as
\begin{equation} \label{eq:DataOptimization}
\begin{aligned}
& \underset{ \substack{\{ p_{lk} \in  \mathbb{R}_{+} \} } }{\textrm{minimize}}
& & \sum_{l=1}^L \sum_{k=1}^K (\tau_c  - \tau_p ) p_{lk}   \\
& \,\,\textrm{subject to}
& &  R_{lk} \geq \xi_{lk},\quad \forall l,k ,\\
&&& p_{lk} \leq  P_{\mathrm{max}, lk }, \quad \forall l,k.
\end{aligned}
\end{equation}
More precisely, from the SINR constraint of user~$k$ in cell~$l$, we can define the corresponding standard interference function $I_{dlk} (\mathbf{p})$ as shown in \eqref{eq:ablkd} at the top of this page, which satisfies all the properties in Definition~\ref{Def:StandardFuncvd}. From the initial setup $p_{lk}(0) = P_{\max,lk}, \forall l,k,$ the proposed algorithm will then update the data power at iteration~$n$ as
\setcounter{eqnback}{\value{equation}} \setcounter{equation}{23}
\begin{equation} \label{eq:plkn}
p_{lk} (n) = \min \left( I_{dlk} (\mathbf{p} (n-1)), P_{\max,l,k} \right), \forall l,k,
\end{equation}
and this iterative process will converge to a fixed-point solution. The proposed iterative solver is shown in Algorithm~\ref{Algorithm2}. This algorithm yields the global solution if problem~\eqref{eq:DataOptimization} is feasible.

\begin{algorithm}[t]
	\caption{Data power control to problem~\eqref{eq:DataOptimization} by alternating optimization} \label{Algorithm2}
	\textbf{Input}:  Setup maximum values $P_{\max,lk}, \forall l,k$; Select initial values $\hat{p}_{lk} = p_{lk}(0) = P_{\max,lk}, \forall l,k,$; Compute the energy of uplink data transmission $E(0) = \sum_{l=1}^L \sum_{k=1}^K (\tau_c -\tau_p) p_{l,k}(0)$; Set initial value $n=1$ and tolerance $\epsilon$.
	\begin{itemize}
		\item[1.] User~$k$ in cell~$l$ updates the data power $	p_{lk} (n)$ using \eqref{eq:plkn}.
		\item[2.] Repeat Step $1$ with other users, then update $E(n) = \sum_{l=1}^L \sum_{k=1}^K (\tau_c -\tau_p) p_{l,k}(n)$ and compute the ratio \fontsize{9}{9}{$\gamma (n) =$ $| E(n) - E(n-1) | / E(n-1)$}.
		\item[3.] If $\gamma (n) \leq \epsilon$ $\rightarrow$ Set $p_{lk}^{\ast} = p_{lk}(n),$ and Stop. Otherwise, set $n= n+1$ and go to Step $1$.
	\end{itemize}
	\textbf{Output}: A fixed point $ p_{lk}^{\ast}$, $\forall l,k$.
\end{algorithm}
\begin{figure*}[t]
	\begin{minipage}{0.48\textwidth}
		\centering
		\includegraphics[trim=0.5cm 0cm 0.7cm 0.5cm, clip=true, width=2.8in]{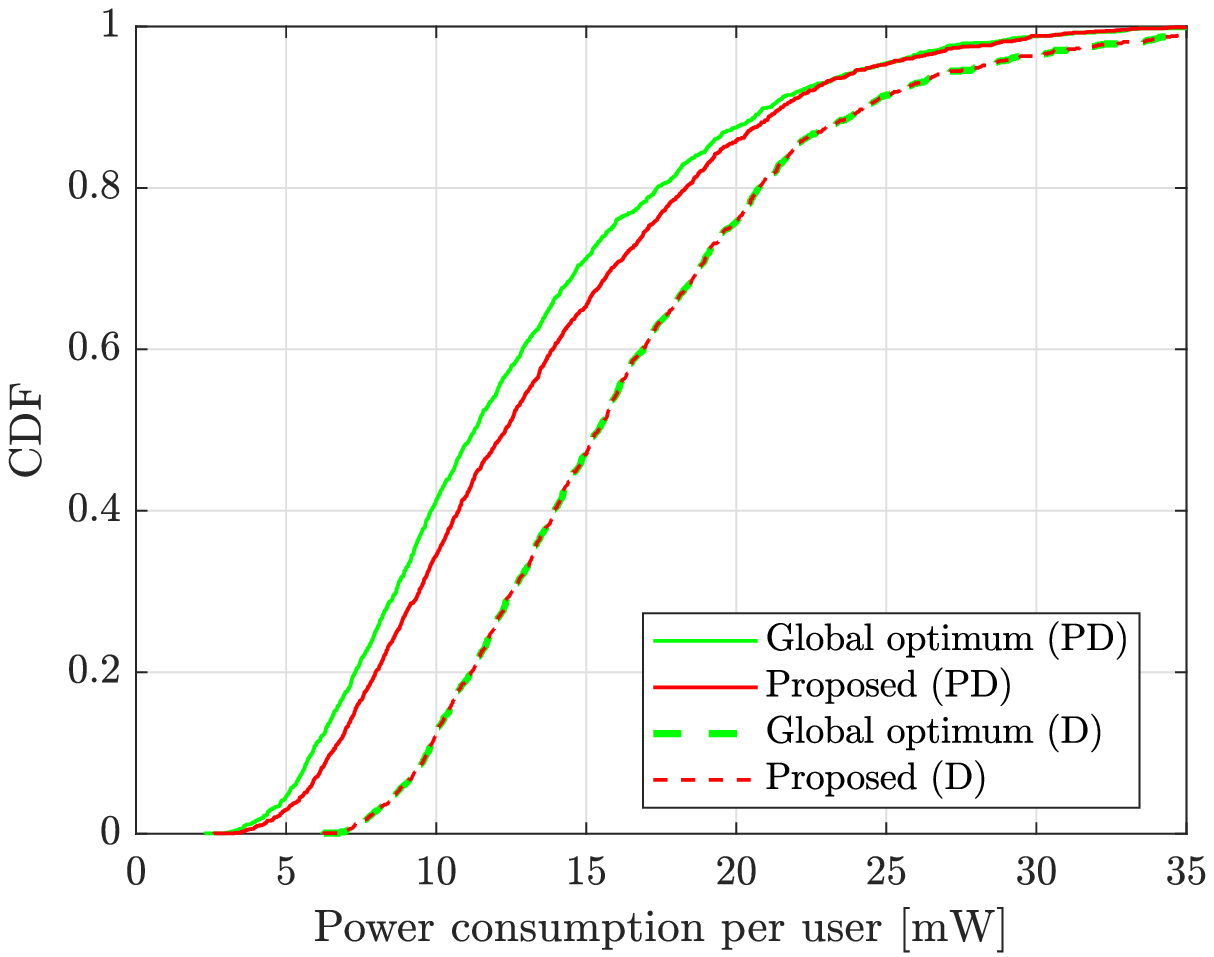} \vspace*{-0.3cm}
		\caption{The CDF of the power consumption per user [mW] for feasible systems. (PD) is the joint pilot and data power control and (D) is the data power control only. }
		\label{Fig-Antennas}
		\vspace*{-0.2cm}
	\end{minipage}
	\hfill
	\begin{minipage}{0.48\textwidth}
		\centering
		\includegraphics[trim=0.5cm 0cm 0.7cm 0.5cm, clip=true, width=2.8in]{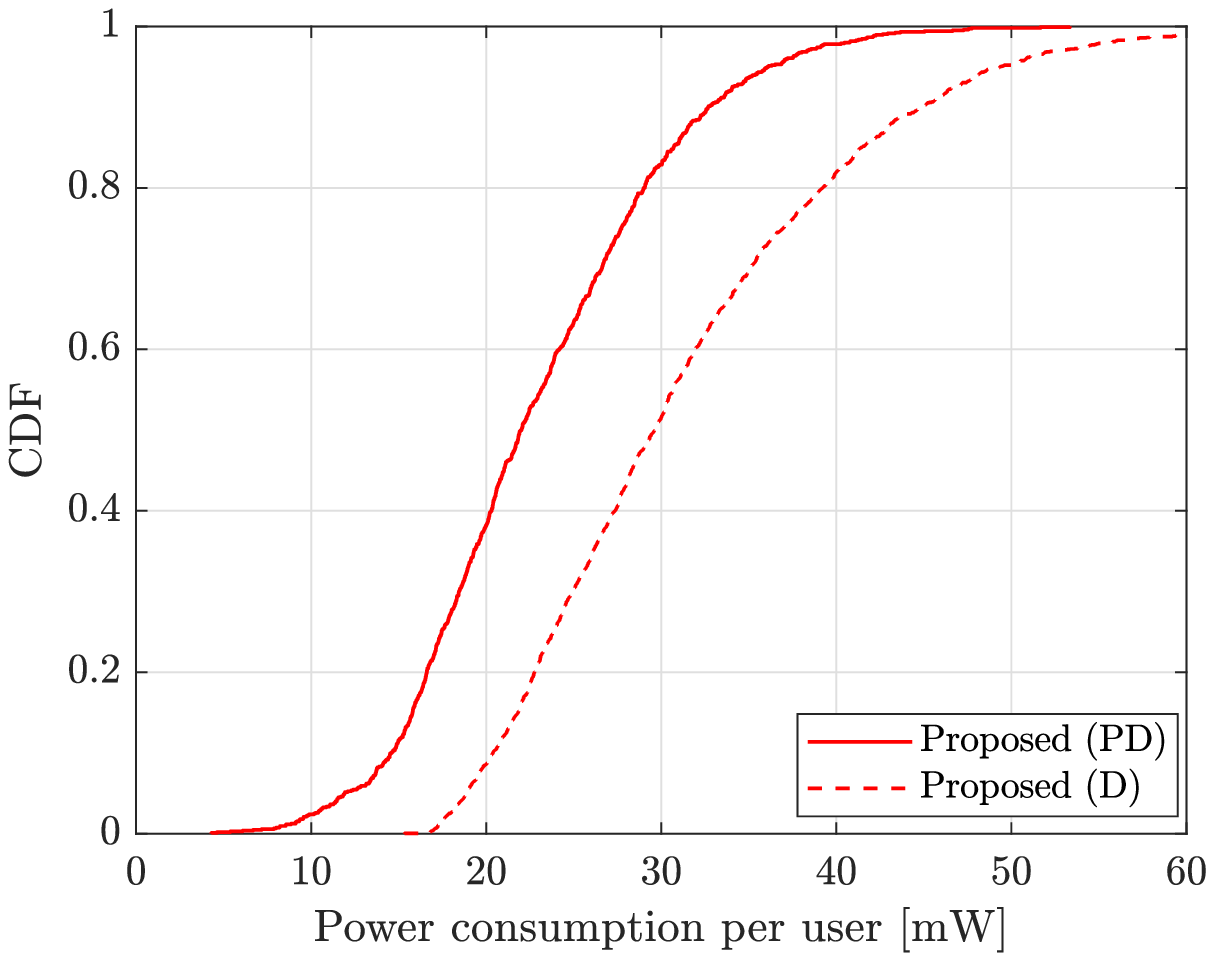} \vspace*{-0.3cm}
		\caption{The CDF of power consumption per user [mW] for infeasible systems. (PD) is the joint pilot and data power control and (D) is the data power control only.}
		\label{Fig-QoS}
		\vspace*{-0.2cm}
	\end{minipage}
\vspace*{-0.2cm}
\end{figure*}
\begin{figure}[t]
	\centering
	\includegraphics[trim=0.5cm 0cm 0.7cm 0.5cm, clip=true, width=2.8in]{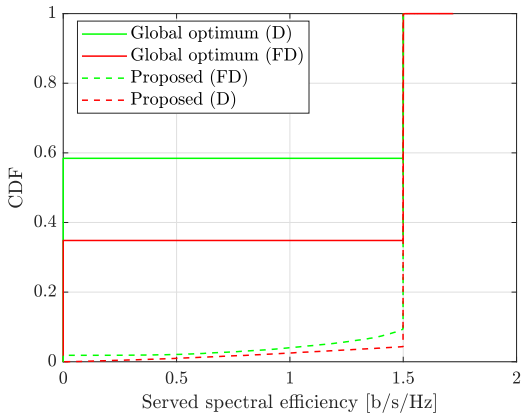} \vspace*{-0.3cm}
	\caption{The CDF of served SE per user [b/s/Hz] in both feasible and infeasible domain. (PD) is the joint pilot and data power control and (D) is the data power control only.}
	\label{FigServedSE}
	\vspace*{-0.2cm}
\end{figure}
\section{Numerical Results} \label{Sec:NumericalResults}
\vspace*{-0.1cm}
A cellular Massive MIMO system with $L=4$ square cells in an area $1$~km$^2$ is considered.  Each cell has a BS located at the center, serving  $K=5$ users which are uniformly distributed within its cell and we assume that no user is closer
to its BS than $35$~m. The required SE of each user is $1.5$ [b/s/Hz]. The wrap-around technique is applied to avoid boundary effects. We model the large-scale fading coefficients based on the 3GPP LTE specifications \cite{LTE2010a}. The system uses $20$~MHz of bandwidth, and the noise variance is $-96$~dBm with the noise figure $5$~dB. The large-scale fading coefficient $\beta_{l'k'}^l$ [dB] is
\begin{equation}
\beta_{l'k'}^l = -148.1 - 37.6 \log_{10} \left( d_{l'k'}^l / 1 \mathrm{km} \right) + z_{l'k'}^l,
\end{equation}
where $d_{l'k'}^l \geq 35$~m denotes the distance between user~$k'$ in cell~$l'$ and BS~$l$. The shadow fading coefficient $z_{l'k'}^l$ has a Gaussian distribution with zero mean and standard derivation $7$~dB. The pilot and data symbols have a maximum power of $200$~mW. Monte-Carlo simulations are done over $3000$ random sets of user locations. In the proposed algorithms, we set $\epsilon = 0.01$. We include the global optimum (obtained using interior point methods)  from previous works \cite{Chien2016b, senel2017joint} as reference, but only for feasible systems.

Fig.~\ref{Fig-Antennas} shows the cumulative distribution function (CDF) of the uplink power per user for feasible systems. The joint pilot and data power control requires the lowest power per user. The global optimum requires $12.6$~mW per user on average, while the proposed method needs $13.3$~mW  ($6\%$ more). Hence, the proposed algorithm does not find the global optimum. Fig.~\ref{Fig-Antennas} also demonstrates the behavior when we only optimize the pilot powers. In this case, the proposed method finds the global optimum, but it requires on average $27\%$ more power than when optimizing both pilot and data powers. 

Fig.~\ref{Fig-QoS} show the CDF of the power consumption [mW] per user for infeasible systems. This is the case of main interest in this paper since there is no global optimum to compute or compare against. As can be seen from the figure, all users transmit with non-zero power at the operating points identified by the proposed algorithms. By jointly optimizing the pilot and data powers, we can reduce the power consumption by $36\%$ on the average.

Fig.~\ref{FigServedSE} shows the CDF of the actual SE per user [b/s/Hz] by utilizing the different power control algorithms with the requested SE $1.5$ [b/s/Hz]. We compare our proposed algorithm, which guarantees non-zero SE, with the solution obtained with the convex optimization solvers. In the latter case, all the SEs are set to zero for infeasible setups.
We notice that $34.8\%$ of the considered setups are infeasible when performing joint pilot and data power control, while $58.5\%$ are infeasible when performing only data power control. In contrast, when using the proposed methods, we can satisfy the SE requirements of most users also in the infeasible setups. More precisely, only $5\%$ or $9\%$ of users, respectively, cannot receive their requested SE in those cases, and even those users obtain non-zero SE.

\vspace*{-0.2cm}
\section{Conclusion}
\vspace*{-0.15cm}
There are known algorithms that can minimize the transmit power required to deliver the SE required by the users in cellular Massive MIMO systems, but these cannot be applied when the requirements cannot be simultaneously achieved.
This paper has developed a joint pilot and data power control algorithm that can deal with such situations. Instead of trying to actively identify the problematic users and removing them from service, we develop an algorithm where most users get their requested SEs while the problematic users are given a lower SE than requested, which can still be non-zero. The numerical results demonstrated that in scenarios where many random user realizations lead to infeasible power control problems, we can still satisfy the SE requirements for the vast majority of the users.

\vspace*{-0.4cm}
\bibliographystyle{IEEEtran}
\bibliography{IEEEabrv,refs}

\begin{thebibliography}{10}
\providecommand{\url}[1]{#1}
\csname url@samestyle\endcsname
\providecommand{\newblock}{\relax}
\providecommand{\bibinfo}[2]{#2}
\providecommand{\BIBentrySTDinterwordspacing}{\spaceskip=0pt\relax}
\providecommand{\BIBentryALTinterwordstretchfactor}{4}
\providecommand{\BIBentryALTinterwordspacing}{\spaceskip=\fontdimen2\font plus
\BIBentryALTinterwordstretchfactor\fontdimen3\font minus
  \fontdimen4\font\relax}
\providecommand{\BIBforeignlanguage}[2]{{%
\expandafter\ifx\csname l@#1\endcsname\relax
\typeout{** WARNING: IEEEtran.bst: No hyphenation pattern has been}%
\typeout{** loaded for the language `#1'. Using the pattern for}%
\typeout{** the default language instead.}%
\else
\language=\csname l@#1\endcsname
\fi
#2}}
\providecommand{\BIBdecl}{\relax}
\BIBdecl

\bibitem{massivemimobook}
E.~Bj\"{o}rnson, J.~Hoydis, and L.~Sanguinetti, ``Massive {MIMO} networks:
  {Spectral}, energy, and hardware efficiency,'' \emph{Foundations and
  Trends{\textregistered} in Signal Processing}, vol.~11, no. 3-4, pp.
  154--655, 2017.

\bibitem{zhang2014heterogeneous}
X.~Zhang, W.~Cheng, and H.~Zhang, ``Heterogeneous statistical {QoS}
  provisioning over {5G} mobile wireless networks,'' \emph{IEEE Network},
  vol.~28, no.~6, pp. 46--53, 2014.

\bibitem{Chien2016b}
T.~V. Chien, E.~Bj{\"o}rnson, and E.~G. Larsson, ``Joint power allocation and
  user association optimization for {M}assive {MIMO} systems,'' \emph{{IEEE}
  Trans. Wireless Commun.}, vol.~15, no.~9, pp. 6384 -- 6399, 2016.

\bibitem{senel2017joint}
K.~Senel, E.~Bj{\"o}rnson, and E.~G. Larsson, ``Joint transmit and circuit
  power minimization in {M}assive {MIMO} with downlink {SINR} constraints: When
  to turn on {M}assive {MIMO}?'' \emph{{IEEE} Trans. Wireless Commun.},
  vol.~18, no.~3, pp. 1834--1846, 2019.

\bibitem{Marzetta2016a}
T.~L. Marzetta, E.~G. Larsson, H.~Yang, and H.~Q. Ngo, \emph{Fundamentals of
  Massive MIMO}.\hskip 1em plus 0.5em minus 0.4em\relax Cambridge University
  Press, 2016.

\bibitem{Stridh2006a}
R.~Stridh, M.~Bengtsson, and B.~Ottersten, ``System evaluation of optimal
  downlink beamforming with congestion control in wireless communication,''
  \emph{{IEEE} Trans. Wireless Commun.}, vol.~5, no.~4, pp. 743--751, 2006.

\bibitem{sung2005generalized}
C.~W. Sung and K.-K. Leung, ``A generalized framework for distributed power
  control in wireless networks,'' \emph{{IEEE} Trans. Inf. Theory}, vol.~51,
  no.~7, pp. 2625--2635, 2005.

\bibitem{pang2008distributed}
J.-S. Pang, G.~Scutari, F.~Facchinei, and C.~Wang, ``Distributed power
  allocation with rate constraints in {Gaussian} parallel interference
  channels,'' \emph{{IEEE} Trans. Inf. Theory}, vol.~54, no.~8, pp. 3471--3489,
  2008.

\bibitem{Kay1993a}
S.~Kay, \emph{Fundamentals of Statistical Signal Processing: Estimation
  Theory}.\hskip 1em plus 0.5em minus 0.4em\relax Prentice Hall, 1993.

\bibitem{Chien2017a}
T.~V. Chien and E.~Bj{\"o}rnson, \emph{Massive MIMO Communications}.\hskip 1em
  plus 0.5em minus 0.4em\relax Springer International Publishing, 2017, pp.
  77--116.

\bibitem{Boyd2004a}
S.~Boyd and L.~Vandenberghe, \emph{Convex Optimization}.\hskip 1em plus 0.5em
  minus 0.4em\relax Cambridge University Press, 2004.

\bibitem{cvx2015}
{CVX Research Inc.}, ``{CVX}: Matlab software for disciplined convex
  programming, academic users,'' \url{http://cvxr.com/cvx}, 2015.

\bibitem{Yates1995a}
R.~Yates, ``A framework for uplink power control in cellular radio systems,''
  \emph{{IEEE} J. Sel. Areas Commun.}, vol.~13, no.~7, pp. 1341--1347, 1995.

\bibitem{LTE2010a}
\emph{Evolved Universal Terrestrial Radio Access ({E-UTRA}); {Physical}
  Channels and Modulation (Release 9)}.\hskip 1em plus 0.5em minus 0.4em\relax
  {3GPP} {TS} 36.213, Sep. 2010.

\end{thebibliography}
\end{document}